\newtheorem{theorem}{Theorem}
\newtheorem{lemma}[theorem]{Lemma}
\renewcommand{\vec}[1]{\mathbf{#1}}
\newcommand{\mat}[1]{\mathbf{#1}}
\DeclareMathOperator{\R}{{\mathbb{R}}}
\DeclareMathOperator{\prox}{{\text{Prox}}}
\DeclareMathOperator*{\argmin}{arg\,min}
\title{Learning of Patch-Based Smooth-Plus-Sparse\\ Models for Image Reconstruction}
\author{Stanislas Ducotterd\textsuperscript{1}, Sebastian Neumayer\textsuperscript{2}, Michael Unser\textsuperscript{1} \\
  \textsuperscript{1}\'Ecole polytechnique f\'ed\'erale de Lausanne, \textsuperscript{2}
Technische Universit\"at Chemnitz\\
  \texttt{stanislas.ducotterd@epfl.ch, \\
  sebastian.neumayer@mathematik.tu-chemnitz.de, \\
  michael.unser@epfl.ch}
}
\begin{document}

\maketitle

\begin{abstract}
  We aim at the solution of inverse problems in imaging, by combining a penalized sparse representation of image patches with an unconstrained smooth one. This allows for a straightforward interpretation of the reconstruction.
  We formulate the optimization as a bilevel problem.
  The inner problem deploys classical algorithms while the outer problem optimizes the dictionary and the regularizer parameters through supervised learning.
  The process is carried out via implicit differentiation and gradient-based optimization. 
  We evaluate our method for denoising, super-resolution, and compressed-sensing magnetic-resonance imaging. We compare it to other classical models as well as deep-learning-based methods and show that it always outperforms the former and also the latter in some instances.
\end{abstract}

\section{Introduction}
\label{sec:Intro}

We aim to learn a smooth-plus-sparse model for the resolution of linear inverse 
problems~\cite{ribes2008linear}.
More precisely, given a measurement operator $\vec H \in \R^{m\times n}$ and noisy measurements $\vec y \in \R^m$, we want to find the underlying ground truth $\vec x^* \in \R^n$ that satisfies $\vec H \vec x^* \approx \vec y$.
As $\mat H$ is commonly ill-conditioned or even singular, direct inversion fails and variational regularization can be deployed instead, as in
\begin{equation}\label{eq:VarProb}
    \argmin_{\mathbf x \in \R^n}\frac{1}{2}\|\mathbf H \mathbf x - \mathbf y\|^2 + \mathcal R(\vec x).
\end{equation}
Here, the first term ensures data consistency, and the regularizer $\mathcal R \colon \R^n \to \R_{\geq 0}$ encodes prior information.
We deploy the patch-based regularizer $\mathcal R(\vec{x}) = \sum_{k=1}^n \mathcal R_k(\vec P_k \vec x)$, where $\mathbf{P}_k \in \R^{d \times n}$ extracts a patch of size $\sqrt{d} \times \sqrt{d}$ at pixel $k$. To each patch, we apply dictionary-based regularizers $\mathcal R_k \colon \R^d \to \R_{\geq 0}$ of the form
\begin{equation}\label{eq:patch_reg}
\mathcal R_k(\vec P_k \vec x) = \min_{\alpha_k \in \R^p}\frac{\beta}{2} \|\mathbf P_k \mathbf x - \mathbf D \boldsymbol \alpha_k\|^2  + \lambda R(\boldsymbol{\alpha}_k)
\end{equation}
with weights $\beta,\lambda>0$, a synthesis dictionary $\vec D \in \R^{d \times p}$, coefficients $\boldsymbol \alpha_k \in \R^p$, and a sparsity prior $R \colon \R^p \to \R_{\geq 0}$.
Essentially, \eqref{eq:patch_reg} enforces that each patch has an $R$-sparse (low regularization cost) representation in the dictionary $\vec D$.
These considerations lead to the patch-based reconstruction
\begin{equation}\label{eq:OurModel}
    \vec{x}^* \in \argmin_{\mathbf x \in \R^n} \min_{(\boldsymbol \alpha_k)_{k=1}^n} \frac{1}{2}\|\mathbf H \mathbf x - \mathbf y\|^2 + \sum_{k=1}^n \frac{\beta}{2}\|\mathbf P_k \mathbf x - \mathbf D \boldsymbol\alpha_k\|^2 + \lambda R(\boldsymbol \alpha_k),
\end{equation}
where we consider $\mathbf D$ and $R$ as learnable parameters.
In particular, they will be chosen such that $\vec{x}^*$ is a high-quality reconstruction.
Such data-driven variational models for solving inverse problems have become increasingly popular in recent years \cite{AMOS2019, RavYeFes2019}.

Given paired training data $(\vec x_m,\vec y_m)_{m=1}^M$, we use supervised learning to obtain the model parameters $\mat D$ and $R$ such that the solutions $\mathbf{x}_k^*$ of \eqref{eq:OurModel} minimize the reconstruction error $\mathcal{L}$ on the training set given by
\begin{equation}\label{eq:EmpRisk}
    \mathcal L(\mat D, R) = \frac{1}{M}\sum_{m=1}^M \Vert \mathbf{x}^*_m - \vec x_m \Vert_1.
\end{equation}
For this, we need to solve a bilevel task with two subproblems: 
\begin{itemize}
\item the inner problem in \eqref{eq:OurModel}, namely, a search for the optimal $\mathbf{x}^*$ with classical optimization algorithms such as inertial proximal alternating linearized minimization (iPALM) \cite{pock_inertial_2016};
\item the outer problem \eqref{eq:EmpRisk}, where we learn $\mathbf{D}$ and $R$.
Gradient-based algorithms such as ADAM \cite{KingBa15} require the derivatives of $\mathbf{x}_m^*$ with respect to the parameters.
These are accessible via implicit differentiation, popularized with the deep equilibrium (DEQ) framework \cite{bai_deep_2019}.
\end{itemize}

\paragraph{Contribution}
We propose a learning scheme to determine the parameters in \eqref{eq:OurModel}.
To improve image reconstruction, we found it helpful to ignore low-frequency components in the regularization \eqref{eq:patch_reg} of each patch $\vec P_k \vec x$. 
Hence, we modify $\vec P_k$ into  $\hat{\mathbf P}_k = (\vec I - \vec Q\vec Q^T) \vec P_k$ with a learnable analysis dictionary $\vec Q \in \R^{d \times p_2}$.
Since $(\vec I - \vec Q\vec Q^T)$ projects onto $\ker(\vec Q^T)$, this means that we remove a learned subspace of $\mathbb{R}^d$ during the patch extraction.
This leads to a decomposition of the reconstruction $\vec x^*$ into a smooth component induced by $\vec Q$, and a sparse one induced by $\vec D$.
For the optimization in \eqref{eq:OurModel}, we show that all linear operators are expressible as convolutions.
Hence, the minimization in \eqref{eq:OurModel} amounts to a search for the fixed point of a two-layer convolutional neural network without any explicit extraction of  patches.
After having trained the dictionaries $\vec D$ and $\vec Q$, and the regularizer $R$, we evaluate the model on denoising, super-resolution and compressed-sensing magnetic-resonance imaging (CS-MRI).
We compare our results with similar classical methods such as TV \cite{rudin_nonlinear_1992}, K-SVD \cite{elad_image_2006} and BM3D \cite{dabov_image_2007} as well as two deep-learning-based methods, namely, DRUNet \cite{zhang_plug-and-play_2022} and Prox-DRUNet \cite{hurault_proximal_2022}. Our code is accessible on Github\footnote{\href{https://github.com/StanislasDucotterd/Smooth-Plus-Sparse-Model}{https://github.com/StanislasDucotterd/Smooth-Plus-Sparse-Model}}.

\section{Related Work}
Patch-based techniques have a long history \cite{peyre2008non,TosFro2011,gilton_learned_2019,altekruger_patchnr_2023} in image reconstruction. 
Since the empirical patch distributions are similar at different image scales, they are well-suited for image characterization \cite{zontak2011internal}. 
In principle, we can interpret our model \eqref{eq:OurModel} as a special case of the generic expected patch log likelihood model \cite{zoran_learning_2011} with a dictionary-based sparsity prior \cite{tai_expected_2015}.
To minimize this objective, one can deploy half quadratic splitting, which makes the surrogate objective in each step similar to \eqref{eq:OurModel}.
Commonly deployed priors for the coefficients $\boldsymbol \alpha_k$ are $\ell_0$, $\ell_1$, or non-convex relaxations between these two.
However, the $\beta$ is successively increased during the iterations.
Over the years, patch-based modeling ideas have also been incorporated into data-driven approaches.

\paragraph{Dictionary Learning}
Dictionary-based priors are well-established \cite{TosFro2011,ZhaXuYan2015}.
Aside from computing the $\boldsymbol \alpha$, most approaches also adapt $\vec D$ during the reconstruction process \cite{MaiBacPon2010,xu_fast_2016}, which we only do during the training.
To perform the updates, the most popular approach used to be alternating minimization with updates of $\vec D$ based on either the SVD, \cite{elad_image_2006}, sequential schemes \cite{MaiBacPon2010}, or the proximal gradient method \cite{mairal_supervised_2008}.
Alternatives are based on block coordinate descent \cite{BaoJiQua2015,xu_fast_2016}.
Nowadays, after the pioneering work of \cite{gregor_learning_nodate}, algorithm unrolling is commonly used to optimize over $\vec D$ \cite{MalMorKow2022,TolDem2022}.
We pursue a different strategy for training, namely, a bilevel approach with implicit differentiation to compute the required gradient $\vec \nabla_ {\vec D}\vec x_k$, see also \cite{chen_learning_2014,ducotterd_learning_2024}.

\paragraph{Convolutional Dictionaries}
There is a line of work that uses convolutional dictionaries to improve the computational efficiency \cite{PapRomSul2017,chun2017convolutional,GarWoh2018}.
A recent extension aims to also incorporate multiscale modeling \cite{liu_learning_2022}.
In contrast to our model \eqref{eq:OurModel}, $\vec x$ is modeled as the convolution of a dictionary $\vec D$ with sparse coefficients $\boldsymbol \alpha$.
As it leads to more degrees of freedom, the approach might be suboptimal for challenging inverse problems \cite{xu_fast_2016}.

\paragraph{Variational Image Decomposition}
Images can be often modeled as the superposition of cartoon parts (piecewise smooth) and texture parts (high-frequency content) \cite{vese2003modeling}.
With synthesis-based variational models, one can perform such a decomposition based on morphological components \cite{starck2005image}, specific texture priors \cite{schaeffer2013low,ono2014cartoon},  specific dictionaries \cite{zhang2017convolutional}, or generative models \cite{HarHol2022}.
In contrast to all these models, our cartoon part is actually smooth and does not contain discontinuities.
Hence, we do not have the issue that there is an ambiguity between the two components.
Closest in spirit to our work is \cite{ZhaYanLi2020}, where the authors also aim to learn two different dictionaries.
However, they impose fewer strict constraints on the dictionaries and treat each patch separately.
Hence, their approach does not readily generalizes to inverse problems.

\section{Method}
\label{sec:pagestyle}
We seek to learn two dictionaries of atoms $\mathbf{D} \in \mathbb{R}^{d \times p_1}$ and $\mathbf{Q} \in \mathbb{R}^{d \times p_2}$ with $\vec Q^T\vec D = \vec 0$ and $\vec Q^T \vec Q = \vec I$, as well as a regularizer $R$ such that any minimizer $(\mathbf{x}^*, \boldsymbol\alpha^*) \subset \R^n \times (\R^{p_1})^n$ of the objective
\begin{align}
J_{\mat D, \mat Q, R, \vec y}(\vec x,\boldsymbol \alpha) =  & \frac{1}{2}\|\mathbf{Hx}-\mathbf{y}\|^2
+ \min_{(\vec c_k)_{k=1}^n}\biggl(\sum_{k=1}^n \frac{\beta}{2} \|\mathbf P_k \mathbf x  - \mat Q \vec c_k - \mathbf D \boldsymbol \alpha_k\|^2  + \lambda R(\boldsymbol{\alpha}_k)\biggr)\label{eq:obj_unc}
\end{align}
corresponds to a high-quality reconstruction $\mathbf{x}^*$ of the data $\vec y$.
Here, the constraint $\vec Q^T\vec D = \vec 0$ ensures that the representations $\mathbf D \boldsymbol \alpha_k$ and $\vec Q\vec c_k$ are contained in two orthogonal subspaces.
Since only the column space of $\vec Q$ matters in \eqref{eq:obj_unc}, the constraint that $\vec Q$ be a tight frame is not restrictive.
Then, \smash{$(\vec I - \vec Q\vec Q^T)$} is the orthogonal projection onto the subspace \smash{$\ker(\vec Q^T)$}, which allows us to rewrite \eqref{eq:obj_unc} as
\begin{align}
J_{\mat D, \mat Q, R, \vec y}(\vec x,\boldsymbol \alpha) 
=  \frac{1}{2}\|\mathbf{Hx}-\mathbf{y}\|^2
+ \sum_{k=1}^n \frac{\beta}{2} \|\hat{\mathbf P}_k \mathbf x  - \mathbf D \boldsymbol \alpha_k\|^2  + \lambda R(\boldsymbol{\alpha}_k),\label{eq:obj}
\end{align}
where $\hat{\mathbf P}_k = (\vec I - \vec Q\vec Q^T)\vec P_k$. By replacing $\vec P_k$ with \smash{$\hat{\mathbf P}_k$} in \eqref{eq:OurModel}, we follow the paradigm that not everything in a patch $\mathbf P_k \vec x$ should be penalized.
In particular, it makes sense to relax constraints on the mean (see also \cite{ducotterd_learning_2024}) and on some low-frequency components.
Due to the constraint $\vec Q^T\vec D = \vec 0$, we can rewrite our generalized patch regularizer \eqref{eq:patch_reg} as
\begin{equation}
\mathcal R_k (\vec x) = \min_{\vec u \in \R^d}\Bigl(\frac{\beta}{2}\Vert (\vec I - \vec Q\vec Q^T) (\vec x - \vec u)\Vert^2 + \min_{\boldsymbol \alpha \text{ s.t. } \mat D \boldsymbol \alpha = \vec u} \lambda R(\boldsymbol \alpha)\Bigr),
\end{equation}
namely, as the infimal convolution of an analysis- and a synthesis-based regularizer.
To evaluate the bilevel training objective \eqref{eq:EmpRisk} with the additional parameter $\mat Q$, we need to minimize the objective \eqref{eq:obj}.
This is detailed in the Section~\ref{sec:InnerOpt}.

\subsection{Training of the Model---Inner Optimization}\label{sec:InnerOpt}

We minimize the objective \eqref{eq:obj} using the iPALM algorithm \cite{pock_inertial_2016}, which consists of the updates
\begin{align}
&\boldsymbol{\beta}_k^{(m)} = \boldsymbol{\alpha}_k^{(m)} + \frac{m-1}{m+2}\bigl(\boldsymbol{\alpha}_k^{(m)}-\boldsymbol{\alpha}_k^{(m-1)}\bigr)\label{eq:ipalm1}\\
&\boldsymbol{\alpha}_k^{(m+1)} = \prox_{\gamma_1  \lambda R}\Bigl(\boldsymbol{\beta}_k^{(m)} -  \gamma_1\vec D^T\bigl(\vec D \boldsymbol{\beta}_k^{(m)} - \hat{\vec P}_k\vec x^{(m)}\bigr)\Bigr)\label{eq:ipalm2}\\
&\mathbf{z}^{(m)} = \mathbf{x}^{(m)} + \frac{m-1}{m+2}\bigl(\mathbf{x}^{(m)}-\mathbf{x}^{(m-1)}\bigr)\label{eq:ipalm3}\\
&\mathbf{x}^{(m+1)} = \vec z^{(m)} - \gamma_2\biggl(\vec H^T(\vec H\vec z^{(m)} - \vec y) + \sum_{k=1}^n \beta \hat{\vec P}_k^T\bigl(\hat{\vec P}_k\vec z^{(m)} - \vec D \boldsymbol{\alpha}_k^{(m+1)}\bigr)\biggr),\label{eq:ipalm4}
\end{align}
where $\gamma_1 = 0.99/ \|\vec D^T \vec D\|$ and $\gamma_2 = 0.99/\|\beta \sum_{k=1}^n \hat{\vec P}_k^T\hat{\vec P}_k\|$.
The convergence of iPALM under some weak conditions is guaranteed by \cite[Thm.~4.1]{pock_inertial_2016}.

In Iterations \eqref{eq:ipalm1}--\eqref{eq:ipalm4}, an important detail is the handling of boundary for the patch extraction $\mathbf P_k \vec x$.
Specifically, for an image with $n$ pixels, we extract $n$ patches by applying circular padding to the image. This extension allows us to express all the relevant linear operators as convolutions.
For this, we represent the code $\boldsymbol \alpha$ as a $p_1$-channel image with the same spatial dimensions as $\vec x$.
Then, $\boldsymbol \alpha_k \in \mathbb R^{p_1}$ refers to a single pixel on that code.
The matrix $\vec D^T \vec D \in \mathbb R^{p_1 \times p_1}$ in \eqref{eq:ipalm2} is applied to every pixel on the code using a 1$\times$1 convolution.
The operator $\sum_{k=1}^n \hat{\vec P}_k^T \vec D$ in \eqref{eq:ipalm4} maps a $p_1$-channel image to a single-channel one and can be implemented as a $\sqrt{d} \times \sqrt{d}$ convolution whose filters are the flipped atoms of $\vec D$.
Since $\sum_{k=1}^n \vec D^T \hat{\vec P}_k$ is the transpose of $\sum_{k=1}^n \hat{\vec P}_k^T \vec D$, we can similarly implement it as a convolution. Thus, we can evaluate \eqref{eq:ipalm2} for each pixel using two convolutions and one nonlinearity.

For $\sum_{k=1}^n \hat{\vec P}_k^T \hat{\vec P}_k$ in \eqref{eq:ipalm4}, we now prove that it is Linear Shift-Invariant and, hence, a convolution.
Therefore, we can also efficiently compute $\gamma_2$ in \eqref{eq:ipalm4} with the Fourier transform.

\begin{lemma}
    The operator $\sum_{k=1}^n \hat{\vec P}_k^T \hat{\vec P}_k$ is Linear Shift-Invariant.
\end{lemma}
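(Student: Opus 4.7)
My strategy is to exploit the circular-padding convention. Because the patches are extracted with circular padding, the patch-extraction operator $\vec P_k$ interacts with cyclic shifts on $\R^n$ in a very structured way: if $T_j\colon\R^n\to\R^n$ denotes the cyclic shift by $j$ pixels (so that indexing is modulo $n$), then every patch of the shifted image is itself a patch of the original image. Concretely, I would establish the identity $\vec P_k T_j = \vec P_{k-j}$ for all $j,k$, which is essentially a change-of-variables in the indices of the patch window. This commutation relation is the heart of the proof; everything else is bookkeeping.

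Next, I would eliminate $\vec Q$ from the picture. Since $\vec Q^T\vec Q = \vec I$, the matrix $M := \vec I - \vec Q\vec Q^T$ is an orthogonal projector and therefore both self-adjoint and idempotent. Hence
\begin{equation*}
    \hat{\vec P}_k^T \hat{\vec P}_k = \vec P_k^T (\vec I - \vec Q\vec Q^T)^T(\vec I - \vec Q\vec Q^T)\vec P_k = \vec P_k^T M \vec P_k.
\end{equation*}
Crucially, $M$ acts only on the $d$-dimensional patch space and does not involve the spatial index $k$; it therefore plays no role in the translation analysis.

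With these two ingredients in place, I would verify shift-invariance directly by checking that conjugation by an arbitrary shift leaves the operator unchanged. Using $T_j^T = T_{-j}$ and the relation above,
\begin{equation*}
    T_{-j}\Bigl(\sum_{k=1}^n \vec P_k^T M \vec P_k\Bigr) T_j = \sum_{k=1}^n (\vec P_k T_j)^T M (\vec P_k T_j) = \sum_{k=1}^n \vec P_{k-j}^T M \vec P_{k-j}.
\end{equation*}
The final sum is just a reindexing of the original one over the cyclic group $\{1,\dots,n\}$, so it equals $\sum_{k=1}^n \vec P_k^T M \vec P_k$. This proves $\sum_k \hat{\vec P}_k^T \hat{\vec P}_k$ commutes with every $T_j$, i.e.\ is linear shift-invariant.

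\textbf{Main obstacle.} The only delicate point is checking $\vec P_k T_j = \vec P_{k-j}$ rigorously with the authors' index conventions: one has to confirm that extracting the patch at pixel $k$ of the cyclically shifted image is identical to extracting the patch at pixel $k-j$ of the original, which relies on the fact that the circular padding commutes with cyclic shifts and that the $\sqrt d\times\sqrt d$ window is translated rigidly. Once this identity is pinned down, the remainder of the argument is a one-line reindexing, so the entire lemma reduces to identifying the correct group action.
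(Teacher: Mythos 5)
Your proposal is correct and follows essentially the same route as the paper's proof: both hinge on the commutation relation $\vec P_k T_j = \vec P_{k-j}$ afforded by the circular padding, followed by a reindexing of the sum over the cyclic group of pixel indices. The only cosmetic differences are that you verify $T_{-j} A T_j = A$ rather than $A T_j = T_j A$ directly, work with a one-dimensional index instead of the paper's two-dimensional one, and factor out the $k$-independent projector $\vec I - \vec Q \vec Q^T$ explicitly, which is a tidy but inessential simplification.
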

\begin{proof}
    We get that $\hat{\mathbf P}_k = (\vec I - \vec Q\vec Q^T) \vec P_k$ is linear since both $(\vec I - \vec Q\vec Q^T)$ and $\vec P_k$ are linear operators. Hence, the same holds for its transpose and the sum $\sum_{k=1}^n \hat{\vec P}_k^T \hat{\vec P}_k$.
    For the need of the proof, we express the sum over the two-dimensional indices of the image $\mathbf k \in \Omega = [1...H] \times [1...W]$, where $H,W \in \mathbb Z$ are the height and width of the image. We get that for any $\mathbf m \in \mathbb{Z}^2$ with associated shift operator $T_{\mathbf m}$, it holds that
    \begin{equation}
        \sum_{\mathbf k \in \Omega}\hat{\vec P}_{\mathbf k}^T\hat{\vec P}_{\vec k} T_{\vec m} \vec x = \sum_{\vec k \in \Omega}  \hat{\vec P}_{\vec k}^T \hat{\vec P}_{(\vec{k-m})^*} \vec x = T_{\vec m} \sum_{\vec k \in \Omega}  \hat{\vec P}_{(\vec{k-m})^*}^T \hat{\vec P}_{(\vec{k-m})^*} \vec x = T_{\vec m} \sum_{\vec k \in \Omega} \hat{\vec P}_{\vec k}^T \hat{\vec P}_{\vec k} \vec x, 
    \end{equation}
    where all equalities rely on the circular padding and where $\vec p^* = \vec p \operatorname{mod} (H+1,W+1)$.
\end{proof}

Therefore, solving \eqref{eq:obj} amounts to finding the fixed point of a two-layer convolutional neural network.

\subsection{Training of the Model---Outer Optimization}
All the parameters in \eqref{eq:obj} are learned through implicit differentiation with the \texttt{torchdeq} library \cite{geng_torchdeq_2023}.
In the following, we provide the parameterization details.
\paragraph{Dictionaries}
We enforced the constraints $\vec Q^T \vec Q = \vec I$ and $\vec Q^T \vec D = \vec 0$ in Section \ref{sec:InnerOpt} to ensure the equivalence of \eqref{eq:obj_unc} and \eqref{eq:obj}.
Now, we impose additional constraints on the dictionaries $\vec D$ and $\vec Q$.
More precisely, if we denote the $k$th column of $\mathbf{D}$ by $\mathbf{d}_k$, which corresponds to the $k$th atom, the feasible set reads
\begin{equation}
    \mathcal{B} = \left\{\mathbf{D} \in \mathbb{R}^{d \times p_1}, \mathbf{Q} \in \mathbb{R}^{d \times p_2}:\|\mathbf{D}\|_2 = 1, \left\|\mathbf{d}_k\right\| = \left\|\mathbf{d}_1\right\| \forall k, \vec Q^T \vec Q = \vec I, \vec Q^T \vec D = \vec 0 \right\}.
\end{equation}
Due to the normalization $\|\mathbf{D}\|_2 = 1$, we get that $\gamma_1 = 0.99$ in \eqref{eq:ipalm2} for the inner optimization.
The norm constraint $\left\|\mathbf{d}_k\right\| = \left\|\mathbf{d}_1\right\|$, $1 \leq k \leq p_1$, ensures that the relative importance of each atom $\vec d_k$ gets encoded in $R$ and not in $\vec D$ itself.
This makes the objective \eqref{eq:obj} more interpretable.
In Algorithm \ref{alg:projB}, we explicitly enforce that $\vec Q$ contains a constant atom.
For an efficient training, it is important to embed the constraints into the forward pass, through a suitable parameterization.

We parameterize the elements $\{\vec D, \vec Q\}$ in $\mathcal{B}$ by unconstrained matrices $\tilde{\vec D} \in \mathbb{R}^{d \times p_1}, \tilde{\vec Q} \in \mathbb{R}^{d \times (p_2-1)}$, as outlined in Algorithm \ref{alg:projB}. 
\begin{algorithm}[t]
\label{alg:projB}
  $\textbf{Input : } \tilde{\vec D} \in \mathbb{R}^{d \times p_1}, \tilde{\vec Q} \in \mathbb{R}^{d \times (p_2-1)}$\\
  $\textbf{Output : } \vec D, \vec Q \in \mathcal{B}$\\
  Remove the mean of each column of $\tilde{\vec Q}$\\
  $\vec Q = \text{Bj\"orck}(\tilde{\vec Q})$\\
  Add a non-zero constant column to $\vec Q$ \\
  $\vec D = (\vec I - \vec Q \vec Q^T) \tilde{\vec D}$\\
  $\vec D \gets \vec D\, \textbf{diag}((\Vert \vec d_k \Vert^{-1})_{k=1}^{p_1})$\\
  Divide $\vec D$ by its spectral norm \\
  \KwRet{$\vec D, \vec Q$}\
  \caption{Parameterization of $\mathcal{B}$}
\end{algorithm}
There, the Björck algorithm (of order $p=1$) is used to parameterize the element $\mathbf Q$ with $\vec Q^T \vec Q = \vec I$ \cite{bjorck_iterative_1971}.
It sets $\vec Q_0 = \tilde{\vec Q}$ and performs the fixed-point iteration
\begin{equation}\label{eq:bjorck}
    \vec Q_{k+1} = \frac{1}{2}\vec Q_k(3\cdot\vec I - \vec Q_k^T\vec Q_k).
\end{equation}
If we remove the mean of each column in $\vec Q_0 = \tilde{\vec Q}$, we can recursively verify that it holds that $\vec 1_d^T \vec Q_{k+1} = \frac{1}{2} \vec 1_d^T\vec Q_k(3\cdot\vec I - \vec Q_k^T\vec A_k) = \vec 0_{p_1}^T$ for the iterations \eqref{eq:bjorck}.
Hence, the Björck algorithm preserves the zero mean.
In practice, 15 iterations of \eqref{eq:bjorck} are enough to obtain  $\vec Q$ with $\vec Q^T \vec Q \approx \vec I$.

\paragraph{Regularizers}
We deploy both convex and non-convex regularizers $R$ in \eqref{eq:OurModel}.
As convex regularizer, we choose the mixed group sparsity $(\ell_1-\ell_2)$-norm with learnable non-negative weights $\tau = \{\tau_l\}_{l=1}^{p_1}$, as given by
\begin{equation}
\label{eq:convex_reg}
    R_\tau(\boldsymbol \alpha) = \sum_{k=1}^{n/4} \sum_{p=1}^{p_1} \tau_l \left(\alpha_{4k,p}^2 + \alpha_{4k-1,p}^2 + \alpha_{4k-2,p}^2 + \alpha_{4k-3,p}^2\right)^{1/2}.
\end{equation}


We refer to the resulting model as a convex patch-based regularizer (CPR).

For the non-convex case, we (implicitly) choose a pixel-wise regularizer $R$ whose proximal operator with learnable non-negative weights $\{\tau_l\}_{p=1}^{p_1}$ is expressed as
\begin{equation}
\label{eq:nonconvex_reg}
    \prox_{R}(\boldsymbol \alpha) = ((\varphi_{p}(\alpha_{k,p}))_{k=1}^n)_{p=1}^{p_1}, \quad \text{ where } \varphi_{p}(x) = \frac{x|x|^\gamma}{\tau_p^\gamma + |x|^\gamma}, \gamma>0.
\end{equation}
Note that \eqref{eq:nonconvex_reg} converges pointwise to the hard threshold $x \mapsto x\cdot\mathbf{1}_{\mathbb{R} \backslash[-\tau, \tau]}(x)$ as $\gamma \to \infty$.
A similar approximation is also used in \cite{herbreteau_dct2net_2022}. 
For our experiments, we use $\gamma \leq 2$, which remains far from the hard-threshold function. We display a numerical approximation of the associated $R$ in Figure~\ref{fig:potential}, and refer to the associated model as the non-convex patch-based regularizer (NCPR). 

\begin{figure*}[!tb]
\centering
\includegraphics[scale=1.0]{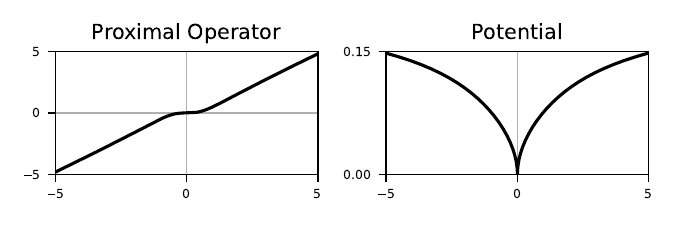}
\caption{The proximal operator of the non-convex model and the corresponding potential computed numerically when $\gamma = 2$ and $\tau = 1$.}
\label{fig:potential}
\end{figure*}

\section{Experiments}
\label{sec:experiments}

\subsection{Training on Denoising}
\label{sec:denoising}

We learn the parameters $\{\vec D, \vec Q, R\}$ in \eqref{eq:obj} on a basic denoising task.
The training dataset consists of 238400 small images of size $(40 \times 40)$ with values in $[0, 1]$ from the BSD500 images \cite{arbelaez_contour_2011}.
For the experiment, we corrupt them with Gaussian noise with $\sigma \in \{ 5/255, 25/255\}$. We use atoms of size 13$\times$13, 120 for $\vec Q$ and 200 for $\vec D$.
The iPALM algorithm terminates if the relative difference of the iterates is smaller than $10^{-4}$.
For the implicit differentiation of the inner solutions $\vec x_m^*$, we use 75 iterations of the Anderson algorithm \cite{anderson_iterative_1965} for $\sigma = 5/255$ and 50 iterations of the Broyden algorithm \cite{broyden_class_1965} for $\sigma = 25/255$.
We use the ADAM optimizer with a batch size of 16, a learning rate of $2\cdot10^{-4}$ for $\{\vec D, \vec Q\}$ and a learning rate of $10^{-3}$ for $R$ and $\beta$.
We train for two epochs and decay the learning rate 10 times per epoch by 0.75 and 0.9 for the CPR and NCPR models, respectively. The whole training takes approximately 10 hours on a Tesla V100 GPU. 

We report the metrics on the BSD68 dataset in Table~\ref{tab:denoising_psnr}, and compare our methods with
\begin{itemize}
    \item the convex TV regularization \cite{rudin_nonlinear_1992};
    \item the patch-based methods K-SVD \cite{elad_image_2006} and BM3D \cite{dabov_image_2007};
    \item the deep learning methods DRUNet \cite{zhang_plug-and-play_2022} and Prox-DRUNet \cite{hurault_proximal_2022}.
\end{itemize}

The DRUNet is a deep denoiser that has around 32 million parameters, 600 times more than our method.
It achieves state-of-the-art performance in denoising and many image-reconstruction tasks.
The Prox-DRUNet constrains the parameters such that it is (approximately) the proximal operator of an unknown non-convex potential.
 We can see that both our models significantly outperform TV and our non-convex model outperforms K-SVD (which approximates the $\ell_0$-norm and is therefore also not convex) and BM3D (which does not result from the minimization of an objective).

\subsection{Visualization of the Model}
\label{sec:viz}

We show the learned free and regularized atoms (corresponding to $\vec Q$ and $\vec D$, respectively) in Figure~\ref{fig:atoms}.
Note that $\vec Q$ represents a $p_2$-dimensional subspace of $\mathbb R^d$.
For any orthonormal matrix $\vec R \in \mathbb R^{p_2 \times p_2}$, the substitution of $\vec Q \vec R$ for $\vec Q$ encodes the same model.
For visualization purposes, we computed $\vec R$ such that the first atom (top left) and last atom (bottom right) in the figure have the highest (lowest, respectively) variance in all of the overlapping patches of the BSD68 dataset. Note that $\mathbf Q$ also has an atom enforced to be constant which is not displayed here. 

The regularized atoms are sorted by the value of their respective $\tau_p$, so that, the atom on the top left is the least penalized and the atom on the bottom right is the most penalized in the model. We show a similar plot for the other noise level and the other regularizer in Section \ref{sec:model_viz}.
We can split any reconstruction $\mathbf x^*$ into two components based on
\begin{equation}\label{eq:decomposition}
     \biggl(\vec H^T\vec H + \beta \sum_{k=1}^n \mathbf P_k^T (\vec I - \vec Q\vec Q^T) \mathbf P_k\biggr) \mathbf x^* = \vec H^T \vec{y} + \beta \sum_{k=1}^n \mathbf P_k^T \mathbf D \boldsymbol\alpha_k^*,
\end{equation}
namely, a smooth part corresponding to a generalized Tikhonov regularization, and a sparse one induced by the reconstruction from $\vec D$. Visual examples are given in Figure~\ref{fig:decomposition} and Sections~\ref{sec:model_viz},\ref{sec:mri_visualization}.

\begin{figure*}[!tb]
\centering
\includegraphics[scale=1.0]{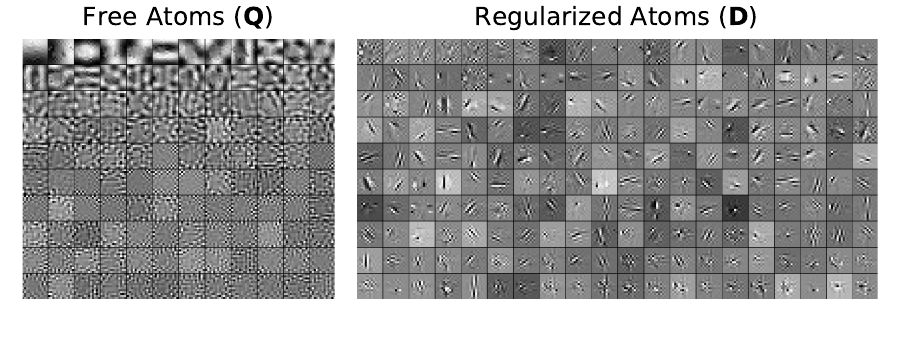}
\caption{Learned atoms for the NCPR model with $\sigma=25/255$.}
\label{fig:atoms}
\end{figure*}


\begin{figure*}[!tb]
\centering
\includegraphics[scale=1.0]{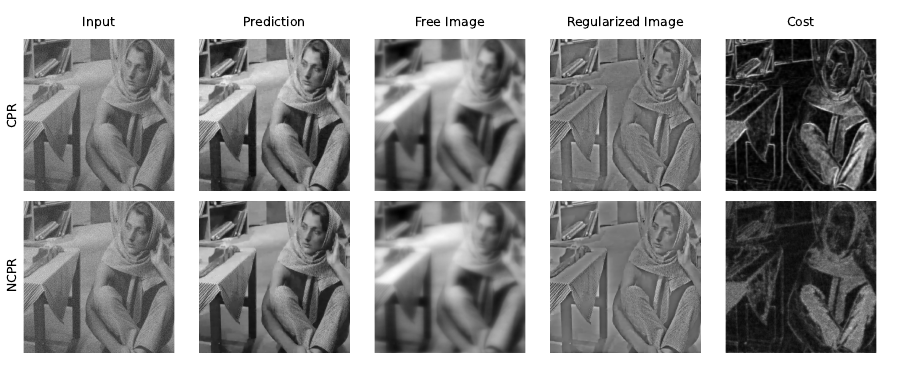}
\caption{Denoising with $\sigma=25/255$: decomposition of $\vec x^*$ into the free-atom and constrained-atom dictionaries.
The last column shows the cost associated with each local patch $\mat P_k \vec x^*$.}
\label{fig:decomposition}
\end{figure*}

\subsection{Inverse Problems}

We benchmark CPR and NCPR for image super-resolution and CS-MRI.
For this, we set $\mathbf H$ in \eqref{eq:obj} (which was the identity during training) to the linear operator that corresponds to the imaging modality.
Again, we compare with the models in Section \ref{sec:denoising}, except for the K-SVD and BM3D, which are not suited to inverse problems.

Image reconstruction with these methods (except DRUNet) requires the minimization of an objective function.
We use the Chambolle algorithm \cite{chambolle_algorithm_2004} for TV, the iPALM algorithm for our models, and the Douglas-Ratchford splitting algorithm \cite{douglas_numerical_1956} for Prox-DRUNet. 
The optimization is terminated when the relative difference of the iterates is below $10^{-5}$.
For DRUNet, we simply follow the algorithm proposed in \cite{zhang_plug-and-play_2022}.

To deploy the models, we tune their hyperparameters on a validation set using the coarse-to-fine grid search from \cite{GouNeuBoh2022}.
Then, the performance is reported on a dedicated test set.
For TV, we only tune the regularization strength $\lambda$  in \eqref{eq:VarProb}. For Prox-DRUNet, we tune $\lambda$ and the noise level $\sigma$ over which the model was trained. For CPR, we tune the $\beta$ and $\lambda$ in \eqref{eq:obj}. For NCPR we tune $\beta$ and a constant multiplying all the $\{\tau_l\}_{p=1}^{p_1}$ described in \eqref{eq:nonconvex_reg}.
For DRUNet, we use the recommended algorithm \cite[Sec.~4.2]{zhang_plug-and-play_2022} and tune the $\sigma_1$ and $\lambda$.
While we use the recommended 40 iterations for super-resolution, we increase this number to 80 iterations for CS-MRI as this significantly improves the performance.

DRUNet uses a fixed number of steps and does not have any convergence guarantees. As result, its reliability depends heavily on empirical performance, which questions its robustness. This is why we changed the number of steps for the CS-MRI experiment. Prox-DRUNet mitigates this problem by constraining its network to approximately be a proximal operator. While it demonstrates convergence in practice, it lacks provable guarantees. This limitation arises from the NP-hardness of the computation the Lipschitz constant of a deep network \cite{VS2018}, a key factor to ensure the theoretical convergence of the method \cite[Thm.~4.4]{hurault_proximal_2022}. In constrast, our method has provable convergence guarantees and offers an interpretable decomposition of the reconstruction with the free and regularized atoms.

\paragraph{Super-Resolution}
Here, we investigate the super-resolution of microstructures. The ground truth consists of 2D slices of size $600 \times 600$. They are extracted from a volume of size $2560\times2560\times2120$ acquired at the Swiss light-source beamline TOMCAT.
We choose $\mat H$ as a convolution with a $16 \times 16$ Gaussian kernel with standard deviation of 2 and a stride 4.
When simulating the data $\vec y = \vec H \vec x + \vec n$ from the ground truth, we add Gaussian noise with $\sigma_{\vec n} = 0.01$.
For all methods, we tune the hyperparameters with a single validation image.
The resulting test performances on 100 images are reported in Table~\ref{tab:super_resolution_psnr}. The results are shown in Figure~\ref{fig:super_res} and the decomposition from CPR and NCPR are shown in Section~\ref{sec:mri_visualization}.

\begin{figure*}[!tb]
\centering
\includegraphics[scale=1.0]{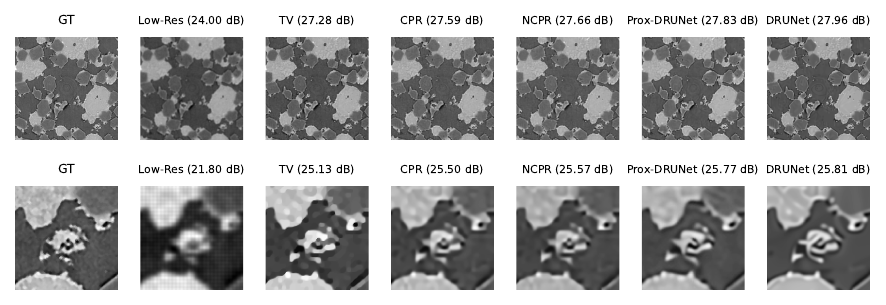}
\caption{Reconstruction performances for the super-resolution task.}
\label{fig:super_res}
\end{figure*}

\paragraph{Compressed-Sensing MRI}
We now look at the CS-MRI recovery of an image $\mathbf{x} \in \mathbb{R}^n$ from noisy measurements $\mathbf{y}=\mathbf{M F x}+\mathbf{n} \in \mathbb{C}^m$, where $\mathbf{M}$ is a subsampling mask (identity matrix with some missing entries), $\mathbf{F}$ is the discrete Fourier transform, and $\mathbf{n}$ is a complex Gaussian noise with $\sigma_{\mathbf{n}} = 2 \cdot 10^{-3}$ for both the real and imaginary parts.
We consider two different types of masks.
Their subsampling rate is determined by the acceleration factor $M_{\text{acc}}$ with the number of columns kept in the k-space being proportional to 1/$M_{\text{acc}}$.
Our setups are 8-fold ($M_{\text{acc}}$ = 8) and 16-fold ($M_{\text{acc}}$ = 16).
The ground truth comes from the fastMRI dataset \cite{fastMRI2018} and we consider it with (PDFS) or without (PD) fat suppression.
For each instance, we run the validation over 10 images and test on 50 other images.

In Table~\ref{tab:mr_reconstruction}, we provide the PSNR values on centered (320 $\times$ 320) patches. The deep-learning-based methods outperform the classical ones for the 8-fold subsampling and the 16-fold with PD. However, despite containing significantly fewer parameters and being trained on significantly fewer data, our NCPR model outperforms the deep-learning-based methods for 16-fold subsampling with PDFS data. 
We also provide visual results in Figures~\ref{fig:mri1} and \ref{fig:mri2}.
While the deep-learning-based methods yield higher resolution in Figure~\ref{fig:mri1}, they still contain several artifacts.
For example, we can see a black structure at the bottom of the Prox-DRUNet reconstruction. We can also see some oscillations on the left of the DRUNet reconstruction.
In Figure~\ref{fig:mri2}, the deep-learning-based methods are outperformed by the NCPR.
Note that several methods contain an almost-identical artifact on the bottom right of the image. We show the decomposition of both CPR and NCPR for those two examples in Section \ref{sec:mri_visualization}. Suprisingly, due to the very low value of the optimal $\beta$ ($< 0.02$ for both), it seems that the free images are not smooth with the NCPR model. This makes sense as the noise is very low and the zero-filled signal contains valuable information which should not be smoothed by the model.

\begin{figure*}[!tb]
\centering
\includegraphics[scale=1.0]{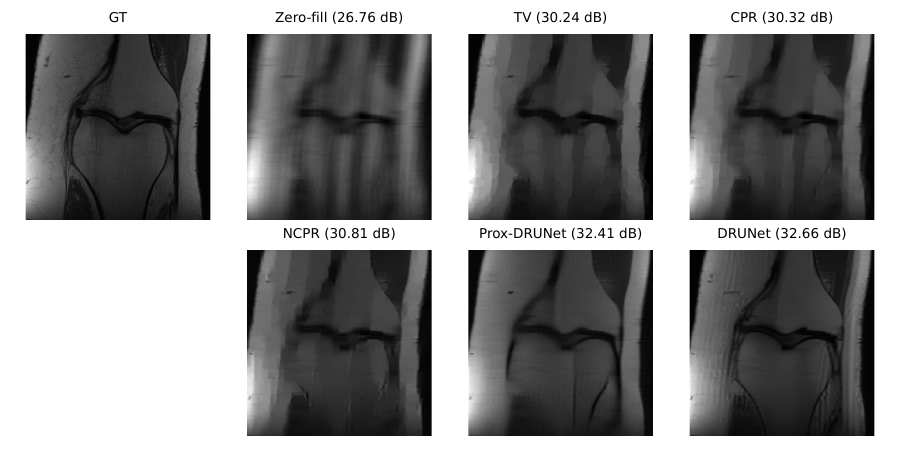}
\caption{Reconstruction performances for 8-fold subsampling and PD data.}
\label{fig:mri1}
\end{figure*}

\begin{figure*}[!tb]
\centering
\includegraphics[scale=1.0]{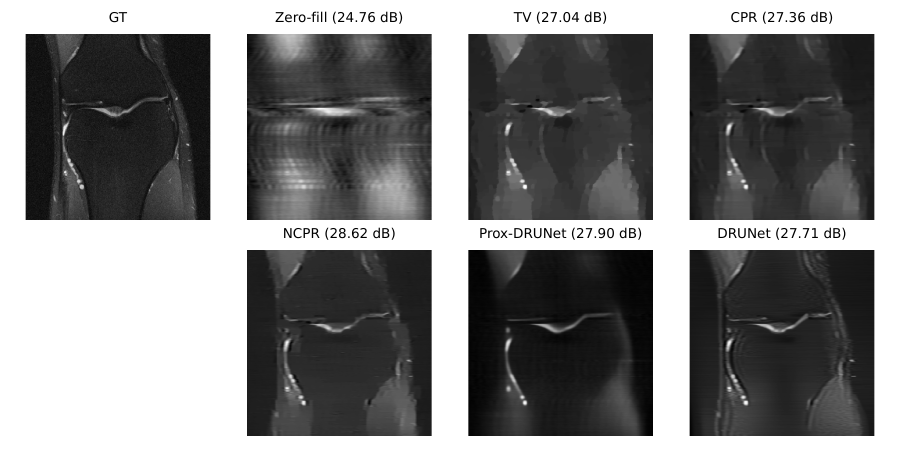}
\caption{Reconstruction performances for 16-fold subsampling and PDFS data.}
\label{fig:mri2}
\end{figure*}

\begin{table}[!tb]
\centering
\parbox{.44\linewidth}{
\begin{tabular}{l|cc}
\hline\hline
Noise level & $\sigma$=5/255 & $\sigma$=25/255 \\
 \hline
 \text{ TV } & 36.41 & 27.48 \\
 \text{ K-SVD } & - & 28.32 \\ 
 \text{ BM3D } & 37.54 & 28.60\\
 \text{ CPR (Ours)} & 36.93 & 28.18 \\
 \text{ NCPR (Ours)} & 37.62 & 28.68 \\
 \hline
 \text{ Prox-DRUNet } & 37.98 & 29.18 \\
 \text{ DRUNet } & \textbf{38.09} & \textbf{29.48}\\
\hline\hline
\end{tabular}
\caption{PSNR values for the denoising of the BSD68 dataset.}%
\label{tab:denoising_psnr}}
\hspace{.6cm}
\parbox{.35\linewidth}{
\setlength\tabcolsep{.15cm}
\begin{tabular}{l|cc}
\hline\hline
Super-resolution & PSNR & SSIM \\
 \hline
 \text{ Bicubic} & 25.63 & 0.699\\
 \text{ TV } & 27.69 & 0.763\\
 \text{ CPR (Ours)} & 27.98 & 0.775\\
 \text{ NCPR (Ours)} & 28.04 & 0.781\\
 \hline
 \text{ Prox-DRUNet } & 28.21 & \textbf{0.789}\\
 \text{ DRUNet } & \textbf{28.37} & 0.788\\
\hline\hline
\end{tabular}
\caption{Super-resolution on the SiC Diamonds dataset.}%
\label{tab:super_resolution_psnr}}
\vspace{.4cm}

\setlength\tabcolsep{.15cm}
\begin{tabular}{l|cccccccc}
\hline\hline
 & & \multicolumn{2}{c}{8-fold} & & & \multicolumn{2}{c}{16-fold} \\ & \multicolumn{2}{c}{PSNR} & \multicolumn{2}{c}{SSIM} & \multicolumn{2}{c}{PSNR} & \multicolumn{2}{c}{SSIM} \\ & PD & PDFS & PD & PDFS & PD & PDFS & PD & PDFS \\
\hline
\text{ Zero-fill } & 23.46 & 26.84 & 0.592 & 0.634 & 20.76 & 24.67 & 0.548 & 0.590\\
\text{ TV } & 25.76 & 28.70 & 0.666 & 0.654 & 21.78 & 25.71 & 0.576 & 0.598\\
\text{ CPR } & 25.82 & 28.77 & 0.672 & 0.667 & 21.81 & 25.68 & 0.583 & 0.602\\
\text{ NCPR } & 26.92 & 29.53 & 0.712 & 0.690 & 22.10 & \textbf{26.16} & 0.601 & \textbf{0.619}\\ 
\hline
\text{ Prox-DRUNet} & 28.80 & 30.07 & 0.750 & \textbf{0.699} & 22.70 & 25.85 & \textbf{0.619} & 0.608\\
\text{ DRUNet } & \textbf{28.85} & \textbf{30.32} & \textbf{0.752} & \textbf{0.699} & \textbf{22.73} & 25.83 & 0.606 & 0.610\\
\hline\hline
\end{tabular}
\caption{PSNR values for MRI reconstruction on the fastMRI dataset.}
\label{tab:mr_reconstruction}
\end{table}

\section{Conclusion}

In this paper, we have introduced a patch-based smooth-plus-sparse model for image reconstruction.
Our approach integrates both synthesis and analysis dictionaries to facilitate a smooth reconstruction process.
In particular, low-frequency components are allowed greater flexibility with the analysis prior, while high-frequency details are regularized through the use of the synthesis dictionary. 
The expression of our optimization process as a two-layer convolutional neural network allows us to train the synthesis dictionary and analysis prior in an efficient and effective manner. 
Experimental results show that our model can outperform state-of-the-art methods when the inverse problem involves very few measurements. This suggests a robustness stemming from our model principled design, which is less reliant on large datasets compared to deep-learning approaches.

\acksection{

Stanislas Ducotterd and Michael Unser acknowledge support from the European Research Council (ERC) under European Union’s Horizon 2020 (H2020), Grant Agreement - Project No 101020573 FunLearn. Sebastian Neumayer acknowledges support from the DFG within the SPP2298 under project number 543939932. 
}

\bibliography{reference}


\newpage

\appendix
\section{Appendix}

\subsection{Additional Model Visualizations}\label{sec:model_viz}

\begin{figure*}[!htb]
\centering
\includegraphics[scale=1.0]{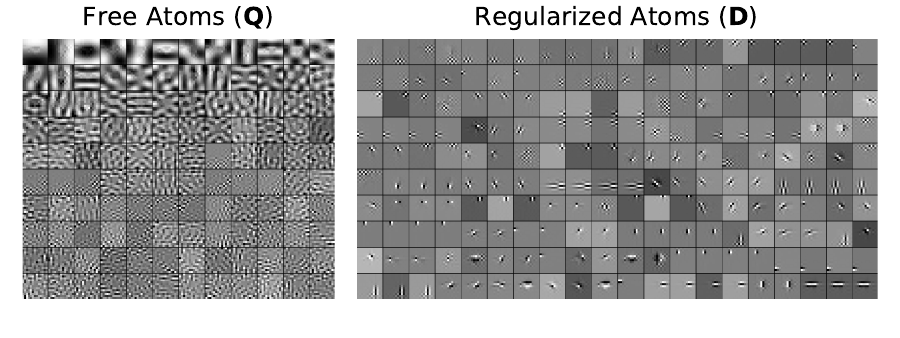}
\caption{Learned atoms for the CPR model with $\sigma=5/255$.}
\label{fig:atoms2}
\end{figure*}

\begin{figure*}[!htb]
\centering
\includegraphics[scale=1.0]{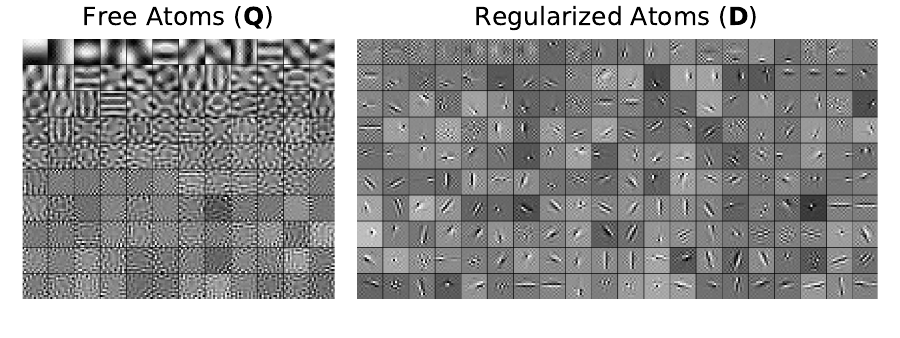}
\caption{Learned atoms for the CPR model with $\sigma=25/255$.}
\label{fig:atoms3}
\end{figure*}

\begin{figure*}[!htb]
\centering
\includegraphics[scale=1.0]{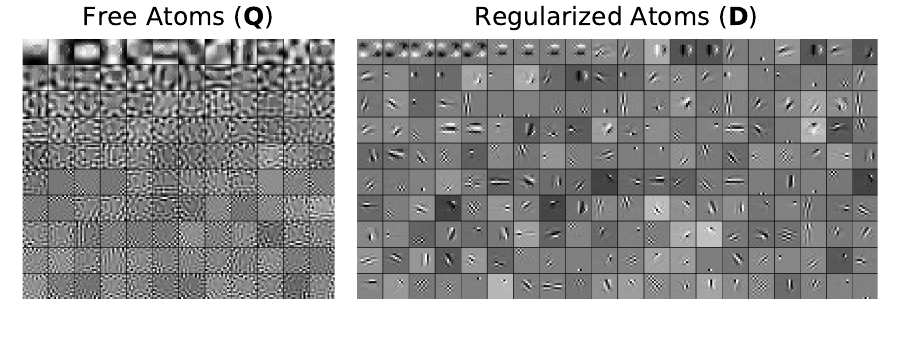}
\caption{Learned atoms for the NCPR model with $\sigma=5/255$.}
\label{fig:atoms4}
\end{figure*}

\begin{figure*}[!htb]
\centering
\includegraphics[scale=1.0]{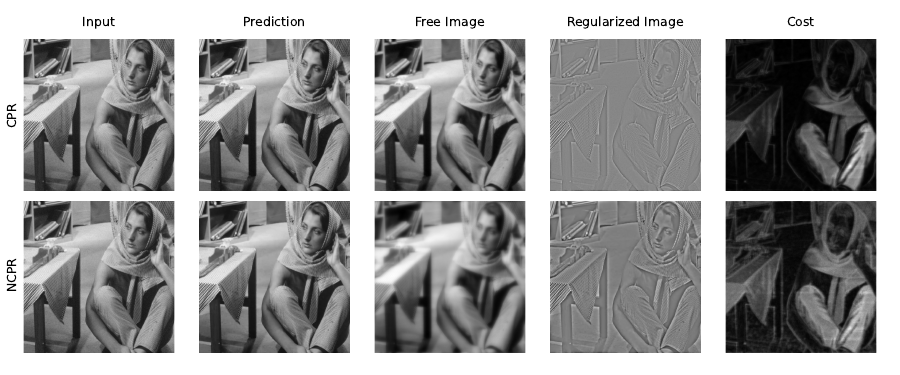}
\caption{Denoising with $\sigma=5/255$: decomposition of $\vec x^*$ into the free-atom and constrained-atom dictionaries.
The last column shows the cost associated with each local patch $\mat P_k \vec x^*$.}
\label{fig:decomposition2}
\end{figure*}

\subsection{Image Reconstructions Visualization}\label{sec:mri_visualization}

\begin{figure*}[!htb]
\centering
\includegraphics[scale=1.0]{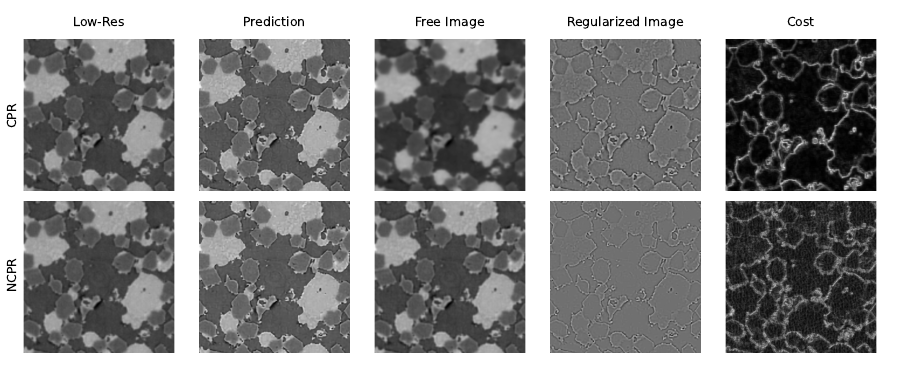}
\caption{Reconstruction of the super-resolution task: decomposition of $\vec x^*$ into the free-atom and constrained-atom dictionaries.
The last column shows the cost associated with each local patch $\mat P_k \vec x^*$.}
\label{fig:decomposition_superres}
\end{figure*}

\begin{figure*}[!htb]
\centering
\includegraphics[scale=1.0]{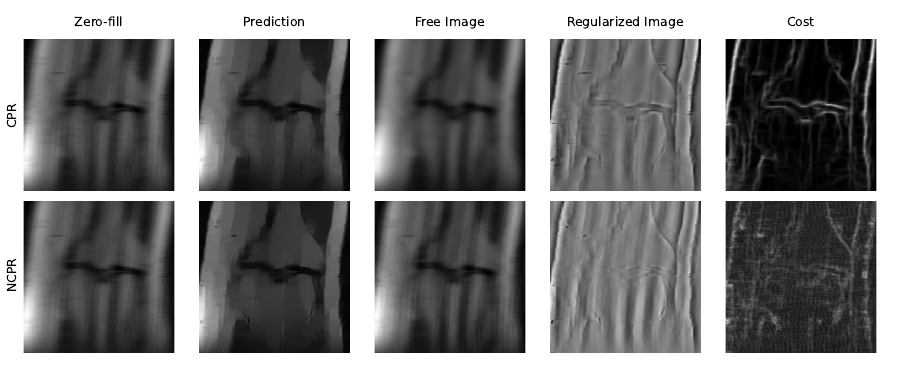}
\caption{Reconstruction with 8-fold subsampling and PD data: decomposition of $\vec x^*$ into the free-atom and constrained-atom dictionaries.
The last column shows the cost associated with each local patch $\mat P_k \vec x^*$.}
\label{fig:decomposition_mri1}
\end{figure*}

\begin{figure*}[!htb]
\centering
\includegraphics[scale=1.0]{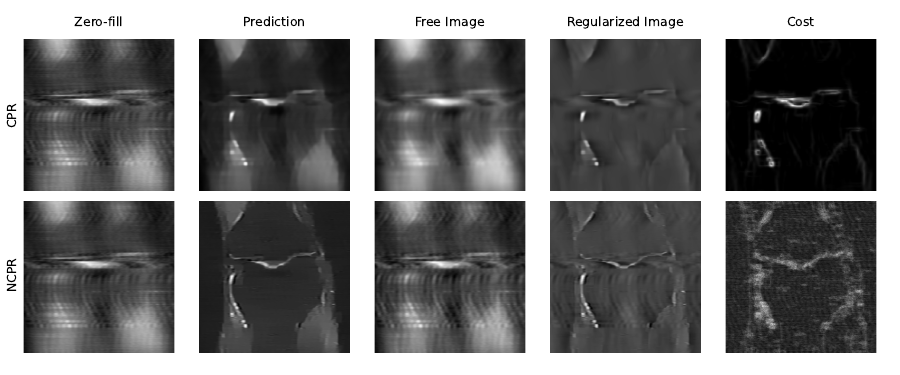}
\caption{Reconstruction with 16-fold subsampling and PDFS data: decomposition of $\vec x^*$ into the free-atom and constrained-atom dictionaries.
The last column shows the cost associated with each local patch $\mat P_k \vec x^*$.}
\label{fig:decomposition_mri2}
\end{figure*}

\end{document}